\documentclass[%
reprint,
amsmath,amssymb,
aps,twocolumn,superscriptaddress
]{revtex4-2}

\usepackage{graphicx}
\usepackage{amsmath}
\usepackage{dcolumn}
\usepackage{bm}
\usepackage{bbm}
\usepackage{float}
\usepackage{braket}
\usepackage{tcolorbox}

\usepackage{amssymb,amsthm}

\theoremstyle{definition}
\newtheorem{definition}{Definition}

\theoremstyle{plain}
\newtheorem{proposition}{Proposition}

\usepackage[colorlinks=true,linkcolor=blue,citecolor=blue,urlcolor=blue]{hyperref}

\begin{document}
\title{Symmetry-resolved properties of the trace distance in thermalizing SU(2) systems}
\author{Haojie Shen}
\email{haojieshen@nju.edu.cn}
\affiliation{National Laboratory of Solid State Microstructures and School of Physics, Nanjing University, Nanjing 210093, China}
\affiliation{Collaborative Innovation Center of Advanced Microstructures, Nanjing University, Nanjing 210093, China}
\affiliation{School of Physics and Astronomy, Shanghai Jiao Tong University, Shanghai 200240, China}

\author{Jie Chen}
\email{chenjie666@xhu.edu.cn}
\affiliation{School of Science, Key Laboratory of High Performance Scientific Computation, Xihua University, Chengdu 610039, China}

\author{Xiaoqun Wang}
\email{xiaoqunwang@zju.edu.cn}
\affiliation{School of Physics, Zhejiang University, Hangzhou 310027, China}

\date{\today}

\begin{abstract}
We study diagnostics of thermalization in quantum many-body systems with global SU(2) symmetry, where the standard eigenstate thermalization hypothesis (ETH) is generalized to its non-Abelian form. As an eigenstate-level probe, we introduce a symmetry-resolved trace distance constructed from the block structure of the reduced density matrix. This block structure separates spin-sector probabilities from configurational fluctuations within each sector, naturally leading to a decomposition into a probability trace distance and a configurational trace distance. The microcanonical average of the former is bounded by fluctuations of the corresponding spin-sector probabilities within a microcanonical energy window, whereas the latter captures finer intra-sector fluctuations. In non-Abelian thermalizing systems, these spin-sector-probability fluctuations are constrained by the non-Abelian ETH and therefore become exponentially suppressed with system size. Numerical studies of the one-dimensional \(J_1\)--\(J_2\) Heisenberg chain are consistent with this picture and suggest that, in the thermal regime, the trace distance is asymptotically dominated by the configurational trace distance.
\end{abstract}

\pacs{72.15.Qm,75.20.Hr,74.20.-z}
\maketitle

\textit{Introduction.}---Thermalization in isolated quantum many-body systems is commonly understood through the eigenstate thermalization hypothesis (ETH)~\cite{Deutsch1991,Srednicki1994,Rigol2008}. For systems with noncommuting conserved charges, however, thermalization is more subtle: the corresponding equilibrium ensembles are non-Abelian, and noncommutativity has been shown to modify thermal-state structure~\cite{Halpern2016,Halpern2018,Halpern2020,Majidy2023}, increase entanglement entropy~\cite{MajidyEntanglement2023}, and even be probed in trapped-ion experiments where small subsystems equilibrate close to a predicted non-Abelian thermal state~\cite{Kranzl2023}. At the eigenstate level, the corresponding structure is encoded in the non-Abelian ETH~\cite{Noh2023,Murthy2023,Lasek2024,Patil2025}. Using the non-Abelian ETH, Ref.~\cite{NohKMS2025} further derived a Kubo--Martin--Schwinger relation for energy eigenstates of SU(2)-symmetric quantum many-body systems. These developments naturally motivate asking how the non-Abelian ETH is reflected in diagnostics of thermalization?

In SU(2)-symmetric systems, the Wigner--Eckart theorem separates matrix elements of an irreducible tensor operator \(T^{(k)}\) into a symmetry part fixed by Clebsch--Gordan coefficients and a dynamical part carried by reduced matrix elements. Since our analysis builds on this structure, we briefly recall the non-Abelian ETH. Denoting \(\bm{\alpha} = (E_\alpha, S_\alpha)\) as the state label, the non-Abelian ETH takes the form
\begin{equation}
\begin{aligned}
\Braket{\bm{\alpha} \| T^{(k)} \| \bm{\alpha}'} 
&= \mathcal{T}^{(k)}(\mathcal{E}, \mathcal{S}) \, \delta_{\bm{\alpha}\bm{\alpha}'} \\
&+ e^{-S_{\mathrm{th}}(\mathcal{E}, \mathcal{S})/2} \, f^{(T)}_\nu(\mathcal{E}, \mathcal{S}, \omega) \, R^{(T)}_{\bm{\alpha}\bm{\alpha}'},\label{eq:NAETH}    
\end{aligned}
\end{equation}
where \(\mathcal{E} = (E_\alpha + E_{\alpha'})/2\), \(\omega = E_\alpha - E_{\alpha'}\), \(\mathcal{S} = (S_\alpha + S_{\alpha'})/2\), and \(\nu = S_\alpha - S_{\alpha'}\). Here, \(S_{\mathrm{th}}(\mathcal{E}, \mathcal{S})\) denotes the thermodynamic entropy in the subspace of given energy and spin, \(\mathcal{T}^{(k)}(\mathcal{E}, \mathcal{S})\) and \(f^{(T)}_\nu(\mathcal{E}, \mathcal{S}, \omega)\) are smooth functions, and \(R^{(T)}_{\bm{\alpha}\bm{\alpha}'}\) is a normalized random variable capturing off-diagonal fluctuations.

Equation~\eqref{eq:NAETH} describes the smoothness of neighboring energy eigenstates, up to exponentially small fluctuations. This suggests using the trace distance between their reduced density matrices as an eigenstate-based diagnostic of thermalization, following Ref.~\cite{Khasseh2023}.
In SU(2)-symmetric systems, the reduced density matrix is block diagonal in sectors labeled by \(S_A\), the total-spin quantum number of subsystem \(A\). This block structure, much as in symmetry-resolved entanglement entropy~\cite{Goldstein2018,Monkman2023,Bianchi2024,Chen2024FrontPhys,Chen2025PRB}, induces a corresponding decomposition of the trace distance. One part, which we call the probability trace distance, is directly tied to the spin-sector probabilities; the other, the configurational trace distance, captures the remaining intra-sector fluctuations. 

In this work, we exploit this structure to diagnose thermalization from neighboring eigenstates in SU(2)-symmetric many-body systems. Our first statement decomposes and bounds the trace distance in terms of these two contributions. Our second statement shows that the microcanonical average of the probability trace distance is controlled by fluctuations of the spin-sector probabilities within a microcanonical energy window, and that in non-Abelian thermalizing systems the non-Abelian ETH describes the scaling of these fluctuations and thereby drives an exponential suppression with system size. We then test these predictions against exact diagonalization results for the one-dimensional \(J_1\)\(-\)\(J_2\) Heisenberg chain.

\textit{Setup.}---
We first define the projectors, reduced density matrices, and trace-distance diagnostics used in the rest of the paper.

Let $\ket{\psi_{SM}}$ be a many-body eigenstate belonging to a fixed global $SU(2)$ multiplet $(S,M)$, and let $A\cup B$ be a bipartition of the system~\cite{Bianchi2024}. When needed, we characterize the bipartition by the subsystem fraction \(x=N_A/N\), where \(N_A\) is the number of sites in subsystem \(A\). In a spin-coupling basis, the state can be expanded as
\begin{equation}
\begin{aligned}
\ket{\psi_{SM}}
&=
\sum_{S_A}\sqrt{P_{S_A}}
\sum_{\tilde S_A,\tilde S_B}
\chi^{(S_A)}_{\tilde S_A,\tilde S_B}\\
&\quad \times
\ket{(\tilde S_A)S_A}_A
\ket{S_A,(\tilde S_B);SM}_B ,
\end{aligned}
\label{eq:wf_su2}
\end{equation}
where $S_A$ denotes the total spin of subsystem $A$, while $\tilde S_A$ and $\tilde S_B$ label the internal branches of the spin-coupling tree on the two subsystems. The coefficients satisfy
\begin{equation}
\sum_{\tilde S_A,\tilde S_B}
\left|\chi^{(S_A)}_{\tilde S_A,\tilde S_B}\right|^2 = 1,
\qquad
\sum_{S_A} P_{S_A}=1.
\end{equation}

\begin{definition}[Subsystem-spin projector]
For a fixed bipartition $A\cup B$, the projector onto the subspace in which subsystem $A$ carries total spin $S_A$ is defined by
\begin{equation}
\begin{aligned}
\Pi_{S_A}
&=
\sum_{\tilde S_A,\tilde S_B,S,M}
\ket{(\tilde S_A)S_A}_A
\ket{S_A,(\tilde S_B);SM}_B\\
&\quad \times
\bra{(\tilde S_A)S_A}_A
\bra{S_A,(\tilde S_B);SM}_B .
\end{aligned}
\label{eq:PiSA_def}
\end{equation}
The probability of finding subsystem \(A\) in the spin-\(S_A\) sector is
\begin{equation}
P_{S_A}
=
\bra{\psi_{SM}}\Pi_{S_A}\ket{\psi_{SM}}.
\label{eq:PSA_def}
\end{equation}
\end{definition}

\begin{proof}[Proof of $SU(2)$ invariance]
The projector $\Pi_{S_A}$ is invariant under global $SU(2)$ rotations. For any
\begin{equation}
U(R)=e^{-i\theta\,\hat n\cdot \mathbf S_{\mathrm{tot}}}, \qquad R\in SU(2),
\end{equation}
its action in the coupled basis is
\begin{equation}
\begin{aligned}
&U(R)\ket{(\tilde S_A)S_A}_A \ket{S_A,(\tilde S_B);SM}_B \\
&= \sum_{M'=-S}^{S} D^{(S)}_{M',M}(R)\, \ket{(\tilde S_A)S_A}_A \\
&\times \ket{S_A,(\tilde S_B);SM'}_B,
\end{aligned}
\label{eq:UR_action}
\end{equation}
where $D^{(S)}_{M',M}(R)$ is the Wigner $D$ matrix. Substituting Eq.~\eqref{eq:UR_action} into Eq.~\eqref{eq:PiSA_def} and using the unitarity relation
\begin{equation}
\sum_M D^{(S)}_{M',M}(R) D^{(S)\,*}_{M'',M}(R)=\delta_{M',M''},
\end{equation}
one immediately obtains
\begin{equation}
U(R)\Pi_{S_A}U(R)^\dagger=\Pi_{S_A}.
\end{equation}
Therefore,
\begin{equation}
[U(R),\Pi_{S_A}]=0, \qquad \forall R\in SU(2),
\end{equation}
or equivalently $[\mathbf S_{\mathrm{tot}},\Pi_{S_A}]=0$. In practice, we perform the numerics in a fixed-$M$ block. Owing to the global $SU(2)$ symmetry, the symmetry-resolved quantities defined below are independent of the particular choice of $M$.
\end{proof}

\noindent\textit{Note.} Although $\Pi_{S_A}$ is defined through the subsystem spin label $S_A$, it is a global projector written in the coupled basis of the full system and is generally not a few-body local operator.

\begin{definition}[Symmetry-resolved reduced density matrix]
Let
\begin{equation}
\rho = \ket{\psi_{SM}}\bra{\psi_{SM}}
\end{equation}
be the density matrix of the full system. The unnormalized reduced density matrix of subsystem $A$ in the spin-$S_A$ sector is defined as
\begin{equation}
\rho_A^{(S_A)}
:=
\mathrm{Tr}_B\!\left[\Pi_{S_A}\,\rho\,\Pi_{S_A}\right].
\label{eq:rhoA_SA_def}
\end{equation}
Its trace is the corresponding spin-sector probability,
\begin{equation}
\mathrm{Tr}_A\!\left[\rho_A^{(S_A)}\right]=P_{S_A}.
\label{eq:rhoA_SA_trace}
\end{equation}
Whenever $P_{S_A}>0$, the normalized reduced density matrix in this sector is defined by
\begin{equation}
\tilde\rho_A^{(S_A)}
:=
\frac{\rho_A^{(S_A)}}{P_{S_A}}.
\label{eq:rhoA_SA_normalized}
\end{equation}
The reduced density matrix of subsystem $A$ then admits the block decomposition
\begin{equation}
\rho_A
=
\mathrm{Tr}_B\,\rho
=
\bigoplus_{S_A}\rho_A^{(S_A)}
=
\bigoplus_{S_A} P_{S_A}\,\tilde\rho_A^{(S_A)}.
\label{eq:rhoA_directsum}
\end{equation}
\end{definition}

\begin{definition}[Symmetry-resolved trace distance]
For two neighboring eigenstates labeled by $\alpha$ and $\alpha+1$ within the same global symmetry sector, we define the trace distance between their reduced density matrices on subsystem $A$ as
\begin{equation}
D_\alpha^A
:=
\frac{1}{2}\big\|\rho_{A,\alpha+1}-\rho_{A,\alpha}\big\|_1.
\label{eq:DA_def}
\end{equation}
Because $\rho_A$ is block diagonal in Eq.~\eqref{eq:rhoA_directsum}, this quantity decomposes as
\begin{equation}
D_\alpha^A
=
\sum_{S_A} D_\alpha^{(S_A)},
\label{eq:DA_decomp}
\end{equation}
where
\begin{equation}
D_\alpha^{(S_A)}
:=
\frac{1}{2}\big\|\rho_{A,\alpha+1}^{(S_A)}-\rho_{A,\alpha}^{(S_A)}\big\|_1
\label{eq:D_SA_def}
\end{equation}
is the trace distance in the spin-$S_A$ sector. We refer to \(D_\alpha^A\) as the symmetry-resolved trace distance. Here $\|\cdot\|_1$ denotes the Schatten 1-norm.
\end{definition}

\textit{Statements.}---
With the notation in place, we now state the main results and give their proofs.

\begin{proposition}[Bounds for the trace distance]
For two neighboring eigenstates labeled by \(\alpha\) and \(\alpha+1\), the trace distance \(D_\alpha^A\) is bounded from below by the change in the spin-sector probabilities and from above by the sum of two contributions defined below.
\end{proposition}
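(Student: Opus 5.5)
The plan is to work sector by sector, using the block decomposition $D_\alpha^A=\sum_{S_A}D_\alpha^{(S_A)}$ of Eq.~\eqref{eq:DA_decomp}. Inside each spin-$S_A$ sector we write $\rho_{A,\alpha}^{(S_A)}=P_{S_A,\alpha}\,\tilde\rho_{A,\alpha}^{(S_A)}$ and split the difference of the two neighboring eigenstates by adding and subtracting a mixed term:
\begin{equation}
\rho_{A,\alpha+1}^{(S_A)}-\rho_{A,\alpha}^{(S_A)}
=\big(P_{S_A,\alpha+1}-P_{S_A,\alpha}\big)\tilde\rho_{A,\alpha+1}^{(S_A)}
+P_{S_A,\alpha}\big(\tilde\rho_{A,\alpha+1}^{(S_A)}-\tilde\rho_{A,\alpha}^{(S_A)}\big).
\end{equation}
If a probability vanishes, the corresponding normalized block is undefined. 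In that case we take it to be any fixed state, for example the other eigenstate's block, so that the identity still holds. The two terms are what define the two contributions in the statement. The \emph{probability trace distance} is $D^{P}_\alpha:=\tfrac12\sum_{S_A}|P_{S_A,\alpha+1}-P_{S_A,\alpha}|$, the classical total-variation distance between the sector distributions. The \emph{configurational trace distance} is $D^{C}_\alpha:=\tfrac12\sum_{S_A}P_{S_A,\alpha}\,\|\tilde\rho_{A,\alpha+1}^{(S_A)}-\tilde\rho_{A,\alpha}^{(S_A)}\|_1$. We could instead symmetrize the weights using $\bar P=(P_\alpha+P_{\alpha+1})/2$, but the one-sided version is simpler.

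\textbf{Upper bound.} Apply the triangle inequality for the Schatten 1-norm to the identity above. Since $\|\tilde\rho^{(S_A)}\|_1=1$ for a normalized positive block, the first term contributes exactly $|P_{S_A,\alpha+1}-P_{S_A,\alpha}|$. Halving and summing over $S_A$ then gives $D_\alpha^A\le D^P_\alpha+D^C_\alpha$.

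\textbf{Lower bound.} For each block, the 1-norm dominates the absolute value of the trace: $\|X\|_1\ge|\mathrm{Tr}X|$. Applying this with $X=\rho_{A,\alpha+1}^{(S_A)}-\rho_{A,\alpha}^{(S_A)}$ and using Eq.~\eqref{eq:rhoA_SA_trace} gives $D_\alpha^{(S_A)}\ge\tfrac12|P_{S_A,\alpha+1}-P_{S_A,\alpha}|$. Summing over sectors gives $D_\alpha^A\ge D^P_\alpha$. An equivalent route is to note that the sector projection is a CPTP map (measure $S_A$), so contractivity of the trace distance applies.

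The only delicate point is the legitimacy of the block decomposition $D^A_\alpha=\sum_{S_A}D^{(S_A)}_\alpha$. Here we must check that $\Pi_{S_A}$ acts on $A$ alone after tracing out $B$, so that the $\rho_A^{(S_A)}$ live on orthogonal subspaces of $\mathcal H_A$. Then the 1-norm of a block-diagonal operator is the sum of the block norms, and no cross terms appear. This follows from Eq.~\eqref{eq:rhoA_directsum}, which I take as given. The remaining issue is the handling of vanishing $P_{S_A}$, addressed above. Everything else is a routine norm inequality. Together the two bounds give $D^P_\alpha\le D^A_\alpha\le D^P_\alpha+D^C_\alpha$.
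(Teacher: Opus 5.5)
Your argument is correct and is essentially the paper's. The lower bound is the same blockwise use of $|\mathrm{Tr}\,X|\le\|X\|_1$, and your add-and-subtract identity followed by the triangle inequality is the strong-convexity estimate the paper cites, proved by hand; both proofs take Eq.~\eqref{eq:rhoA_directsum} as given. (That assumption is genuinely needed for the upper bound, since orthogonal supports of the $\rho_A^{(S_A)}$ alone would not rule out cross blocks between different $S_A$ in $\rho_A$.)

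The only difference is that your configurational term carries the weight $P^{(\alpha)}_{S_A}$, whereas the paper's $D^{A}_{\alpha,\mathrm{conf}}$ uses $P^{(\alpha+1)}_{S_A}$. Adding and subtracting $P^{(\alpha+1)}_{S_A}\tilde\rho^{(S_A)}_{A,\alpha}$ instead of $P^{(\alpha)}_{S_A}\tilde\rho^{(S_A)}_{A,\alpha+1}$ reproduces the paper's quantity exactly.
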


\begin{proof}
For the lower bound, Hölder's inequality gives
\(
\big|\mathrm{Tr}(X)\big| \le \|X\|_1
\)
for any operator \(X\). Applying this to
\(
X=\rho_{A,\alpha+1}^{(S_A)}-\rho_{A,\alpha}^{(S_A)}
\),
we obtain
\begin{equation}
\begin{aligned}
\Delta P_{S_A}
&\equiv \big| P^{(\alpha+1)}_{S_A} - P^{(\alpha)}_{S_A} \big| \\
&= \big| \mathrm{Tr}\big[\rho_{A,\alpha+1}^{(S_A)} - \rho_{A,\alpha}^{(S_A)} \big] \big| \\
&\le \big\| \rho_{A,\alpha+1}^{(S_A)} - \rho_{A,\alpha}^{(S_A)} \big\|_1 \\
&= 2 D_\alpha^{(S_A)} .
\end{aligned}
\label{lowerbound}
\end{equation}

For the upper bound, applying the strong convexity of the trace distance~\cite{Nielsen_Chuang_2010} to the block decomposition, we obtain
\begin{equation}
\begin{aligned}
D_\alpha^{A}
&\le \tfrac{1}{2}\sum_{S_A} \big| P_{S_A}^{(\alpha+1)} - P_{S_A}^{(\alpha)} \big| \\
&\quad + \tfrac{1}{2}\sum_{S_A} P_{S_A}^{(\alpha+1)}
\big\| \tilde{\rho}_{A,\alpha+1}^{(S_A)} - \tilde{\rho}_{A,\alpha}^{(S_A)} \big\|_1\\
&\equiv D^{A}_{\alpha,\mathrm{prob}} + D^{A}_{\alpha,\mathrm{conf}}.
\end{aligned}\label{SRTD}
\end{equation}
Here
\(
D^{A}_{\alpha,\mathrm{prob}}
\)
denotes the contribution from the change in the spin-sector probabilities, while
\(
D^{A}_{\alpha,\mathrm{conf}}
\)
denotes the contribution from the change within each block of the reduced density matrix. We therefore refer to them as the probability trace distance and the configurational trace distance, respectively. Combining Eqs.~\eqref{lowerbound} and \eqref{SRTD}, we arrive at
\begin{equation}
   D^{A}_{\alpha,\mathrm{prob}} \le D_\alpha^{A} \le D^{A}_{\alpha,\mathrm{prob}} + D^{A}_{\alpha,\mathrm{conf}}.
\label{eq:D_bounds}
\end{equation}
\end{proof}

\begin{proposition}[Microcanonical average of \(D^{A}_{\alpha,\mathrm{prob}}\)]
Let \(\mathcal{W}\) be a narrow microcanonical window, namely a set of consecutive eigenstates within a narrow energy interval, and let
\(N_{\mathrm{sec}}\) denote the number of allowed spin sectors. Write
\(\overline{P_{S_A}} := \big\langle P_{S_A}^{(\alpha)} \big\rangle_{\mathcal{W}}\).
Define
\begin{equation}
\big\langle D^{A}_{\alpha,\mathrm{prob}} \big\rangle_{\mathcal{W}}
:=
\frac{1}{2}\sum_{S_A}
\big\langle
\big|P_{S_A}^{(\alpha+1)}-P_{S_A}^{(\alpha)}\big|
\big\rangle_{\mathcal{W}}.
\end{equation}
Then
\begin{equation}
\big\langle D^{A}_{\alpha,\mathrm{prob}} \big\rangle_{\mathcal{W}}
\le
\sqrt{
N_{\mathrm{sec}}
\left(
\sum_{S_A}\mathrm{Var}_{\mathcal{W}}\big(P_{S_A}^{(\alpha)}\big)
\right)
},
\label{eq:Dprob_micro_bound}
\end{equation}
For pair-dependent quantities, \(\langle\cdot\rangle_{\mathcal{W}}\) denotes the average over all \(\alpha\) such that both \(\alpha\) and \(\alpha+1\) belong to \(\mathcal{W}\). In particular, if
\begin{equation}
N_{\mathrm{sec}}
\sum_{S_A}\mathrm{Var}_{\mathcal{W}}\big(P_{S_A}^{(\alpha)}\big)
\to 0
\end{equation}
in the thermodynamic limit, then \(\langle D^{A}_{\alpha,\mathrm{prob}}\rangle_{\mathcal{W}}\to0\). When the non-Abelian ETH holds in \(\mathcal{W}\) and \(N_{\mathrm{sec}}\sum_{S_A}\mathrm{Var}_{\mathcal{W}}\big(\Delta_{S_A}^{(\bm{\alpha})}\big)\) grows at most polynomially with system size, this decay is exponential. The window is assumed to contain sufficiently many consecutive eigenstates so that endpoint corrections are negligible.
\end{proposition}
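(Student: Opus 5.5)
The bound follows from three elementary inequalities applied in sequence: the triangle inequality after inserting the window mean, Jensen (or Cauchy--Schwarz) on each pair average, and Cauchy--Schwarz over the sectors. The main subtlety is the bookkeeping between pair averages and single-state averages.

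\emph{Step 1: insert the mean.} Fix a sector $S_A$ and write $\delta^{(\alpha)}_{S_A}:=P^{(\alpha)}_{S_A}-\overline{P_{S_A}}$. The triangle inequality gives
\begin{equation*}
\big|P_{S_A}^{(\alpha+1)}-P_{S_A}^{(\alpha)}\big|\le |\delta^{(\alpha+1)}_{S_A}|+|\delta^{(\alpha)}_{S_A}|.
\end{equation*}
Average over admissible pairs $\alpha,\alpha+1\in\mathcal{W}$. Each index appears in at most two pairs, and up to endpoint corrections (assumed negligible) the pair average of either term equals the single-state average $\langle|\delta^{(\alpha)}_{S_A}|\rangle_{\mathcal W}$. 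Hence
\begin{equation*}
\tfrac12\big\langle|P_{S_A}^{(\alpha+1)}-P_{S_A}^{(\alpha)}|\big\rangle_{\mathcal W}\le \langle|\delta^{(\alpha)}_{S_A}|\rangle_{\mathcal W}.
\end{equation*}
This is the step needing care: the constant must come out as $1$, which uses the $\tfrac12$ in the definition of $D^A_{\alpha,\mathrm{prob}}$ together with the endpoint assumption. A finite window would only add an $O(1/|\mathcal W|)$ correction.

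\emph{Step 2: Jensen per sector.} Since the window average is a probability average, Jensen (Cauchy--Schwarz) gives $\langle|\delta^{(\alpha)}_{S_A}|\rangle_{\mathcal W}\le\sqrt{\mathrm{Var}_{\mathcal W}(P^{(\alpha)}_{S_A})}$.

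\emph{Step 3: Cauchy--Schwarz over sectors.} Sum over the $N_{\mathrm{sec}}$ sectors and apply Cauchy--Schwarz:
\begin{equation*}
\sum_{S_A}\sqrt{\mathrm{Var}_{\mathcal W}}\le\sqrt{N_{\mathrm{sec}}\sum_{S_A}\mathrm{Var}_{\mathcal W}}.
\end{equation*}
This yields Eq.~\eqref{eq:Dprob_micro_bound}. The first corollary, that $\langle D^A_{\alpha,\mathrm{prob}}\rangle_{\mathcal W}\to0$ whenever the right-hand side tends to zero, is then immediate.

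\emph{Step 4: exponential decay from the non-Abelian ETH.} Express $P_{S_A}$ as the expectation of the SU(2)-invariant (rank-$0$) operator $\Pi_{S_A}$. For a scalar operator the Wigner--Eckart factor is trivial, so $P^{(\bm\alpha)}_{S_A}$ equals the reduced diagonal element, up to normalization by the multiplet dimension. Eq.~\eqref{eq:NAETH} then gives
\begin{equation*}
P^{(\bm\alpha)}_{S_A}=\mathcal{T}_{S_A}(E_\alpha,S)+e^{-S_{\mathrm{th}}/2}\Delta^{(\bm\alpha)}_{S_A}.
\end{equation*}
In a narrow window the smooth part varies negligibly (its variation is controlled by the window width, taken to shrink appropriately). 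Therefore $\mathrm{Var}_{\mathcal W}(P_{S_A})\simeq e^{-S_{\mathrm{th}}}\mathrm{Var}_{\mathcal W}(\Delta_{S_A})$. If $N_{\mathrm{sec}}\sum_{S_A}\mathrm{Var}(\Delta_{S_A})$ grows at most polynomially while $S_{\mathrm{th}}$ is extensive, the bound decays like $\mathrm{poly}(N)\,e^{-S_{\mathrm{th}}/2}$, i.e.\ exponentially.

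One further caveat: $\Pi_{S_A}$ is not few-body, as the earlier Note stresses. I would therefore state the applicability of Eq.~\eqref{eq:NAETH} to $\Pi_{S_A}$ as an explicit hypothesis, which is how the proposition is phrased (``when the non-Abelian ETH holds in $\mathcal W$''), rather than attempt to derive it.
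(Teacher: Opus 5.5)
Your proposal is correct and follows the paper's proof step for step. The shared chain is: the triangle inequality about the window mean, relabeling in the pair average up to $O(|\mathcal{W}|^{-1})$ endpoint corrections, the RMS--mean (Jensen) inequality in each sector, Cauchy--Schwarz over the $N_{\mathrm{sec}}$ sectors, and finally the diagonal part of the non-Abelian ETH applied to the scalar projector $\Pi_{S_A}$, giving $\mathrm{Var}_{\mathcal{W}}(P_{S_A})\simeq e^{-S_{\mathrm{th}}}\mathrm{Var}_{\mathcal{W}}(\Delta_{S_A})$. Two of your caveats are fair refinements of points the paper writes as exact equalities or leaves implicit: the smooth part's variation across the window must itself be negligible, and applying the ETH to the non-few-body $\Pi_{S_A}$ is an explicit hypothesis.
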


\begin{proof}
By the triangle inequality,
\begin{equation}
\big|P_{S_A}^{(\alpha+1)}-P_{S_A}^{(\alpha)}\big|
\le
\big|P_{S_A}^{(\alpha+1)}-\overline{P_{S_A}}\big|
+
\big|P_{S_A}^{(\alpha)}-\overline{P_{S_A}}\big|.
\end{equation}
Averaging over all admissible consecutive pairs in \(\mathcal{W}\), and relabeling \(\alpha+1\mapsto\alpha\) in the first averaged term, we obtain the following bound up to endpoint corrections of order \(O(|\mathcal{W}|^{-1})\), which are negligible for a sufficiently large window:
\begin{equation}
\big\langle D^{A}_{\alpha,\mathrm{prob}} \big\rangle_{\mathcal{W}}
\le
\sum_{S_A}
\big\langle
\big|P_{S_A}^{(\alpha)}-\overline{P_{S_A}}\big|
\big\rangle_{\mathcal{W}}.
\label{eq:Dprob_pair_bound}
\end{equation}
Using the root-mean-square--mean inequality,
\(
\langle |X| \rangle_{\mathcal{W}} \le \sqrt{\langle X^2 \rangle_{\mathcal{W}}}
\),
with \(X=P_{S_A}^{(\alpha)}-\overline{P_{S_A}}\), we arrive at
\begin{equation}
\big\langle D^{A}_{\alpha,\mathrm{prob}} \big\rangle_{\mathcal{W}}
\le
\sum_{S_A}\sqrt{\mathrm{Var}_{\mathcal{W}}\big(P_{S_A}^{(\alpha)}\big)}.
\end{equation}
Applying the Cauchy--Schwarz inequality to
\(
x_{S_A}=1
\)
and
\(
y_{S_A}=\sqrt{\mathrm{Var}_{\mathcal{W}}\big(P_{S_A}^{(\alpha)}\big)}
\),
and using
\(
\sum_{S_A} x_{S_A}^2 = N_{\mathrm{sec}}
\),
gives Eq.~\eqref{eq:Dprob_micro_bound}.

When the non-Abelian ETH holds in \(\mathcal{W}\), applying the diagonal part of Eq.~\eqref{eq:NAETH} to the \(SU(2)\)-invariant projector \(\Pi_{S_A}\) in Eq.~\eqref{eq:PSA_def} yields
\begin{equation}\label{eq:P_decomp}
P_{S_A}^{(\bm{\alpha})}
= \mathcal{T}^{(0)}_{S_A}(\mathcal{E},\mathcal{S})
+ e^{-S_{\mathrm{th}}(\mathcal{E},\mathcal{S})/2}\, \Delta_{S_A}^{(\bm{\alpha})},
\end{equation}
where \(\mathcal{T}^{(0)}_{S_A}\) is smooth within the narrow window and \(\Delta_{S_A}^{(\bm{\alpha})}\) denotes the state-to-state fluctuation. Therefore,
\begin{equation}\label{eq:VarDef}
\mathrm{Var}_{\mathcal{W}}\big(P_{S_A}^{(\bm{\alpha})}\big)
= e^{-S_{\mathrm{th}}(\mathcal{E},\mathcal{S})} \,
\mathrm{Var}_{\mathcal{W}}\big(\Delta_{S_A}^{(\bm{\alpha})}\big).
\end{equation}
Substituting Eq.~\eqref{eq:VarDef} into Eq.~\eqref{eq:Dprob_micro_bound}, we obtain
\begin{equation}
\big\langle D^{A}_{\alpha,\mathrm{prob}} \big\rangle_{\mathcal{W}}
\le
e^{-S_{\mathrm{th}}(\mathcal{E},\mathcal{S})/2}
\sqrt{
N_{\mathrm{sec}}
\sum_{S_A}\mathrm{Var}_{\mathcal{W}}\big(\Delta_{S_A}^{(\bm{\alpha})}\big)
}.
\label{D_prob}
\end{equation}
Since \(S_{\mathrm{th}}(\mathcal{E},\mathcal{S})\propto N\), the right-hand side vanishes exponentially with \(N\) provided \(N_{\mathrm{sec}}\sum_{S_A}\mathrm{Var}_{\mathcal{W}}\big(\Delta_{S_A}^{(\bm{\alpha})}\big)\) grows at most polynomially with system size.
\end{proof}

\begin{figure*}[!t]
    \centering
    \includegraphics[width=0.8\linewidth]{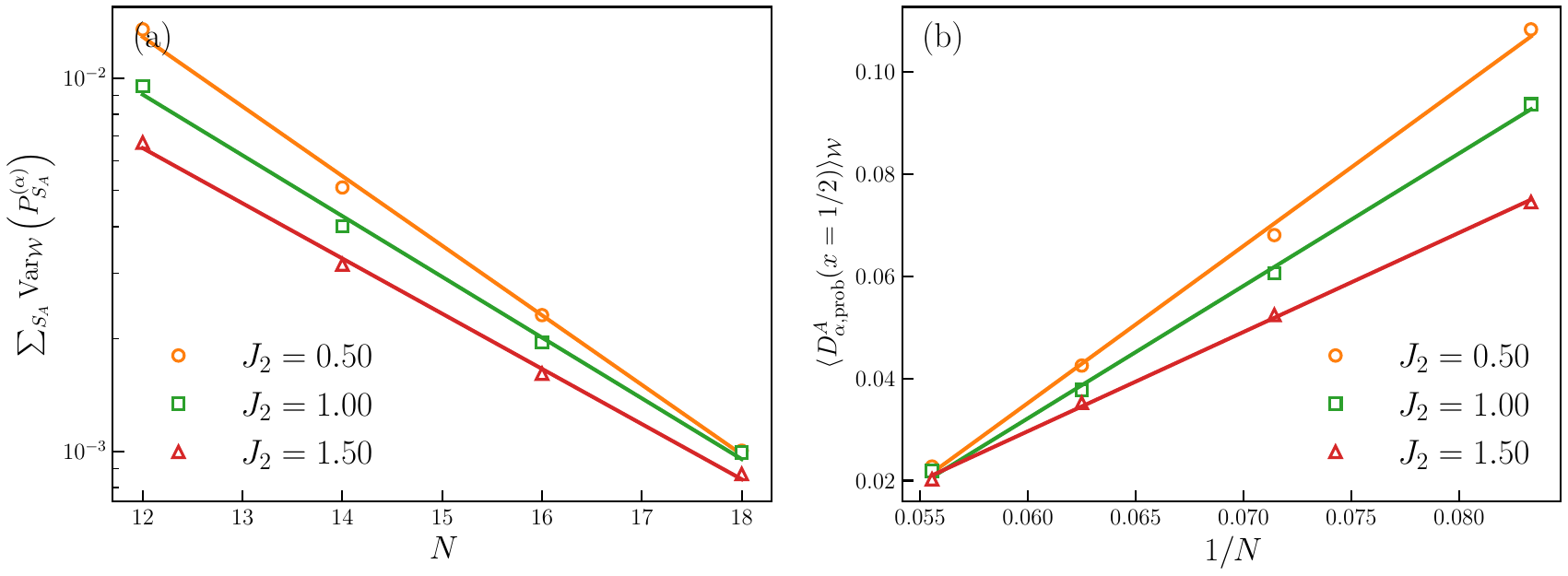}
   \caption{
(a) Finite-size scaling of the sum of variances of the spin-sector probabilities,
\(\sum_{S_A}\mathrm{Var}_{\mathcal{W}}\!\big(P_{S_A}^{(\bm{\alpha})}\big)\),
and
(b) the corresponding window average
\(\langle D^{A}_{\alpha,\mathrm{prob}}(x=1/2)\rangle_{\mathcal{W}}\).
Different colors correspond to different \(J_2\) values.
}
    \label{fig1}
\end{figure*}

\textit{Numerical results.}---
As a numerical illustration, we consider the one-dimensional \(J_1\)--\(J_2\) Heisenberg chain with open boundaries,
\begin{equation}
H = \sum_{j=1}^{N-1} J_1\,\mathbf S_j\!\cdot\!\mathbf S_{j+1}
  + J_2 \sum_{j=1}^{N-2} \mathbf S_j\!\cdot\!\mathbf S_{j+2},
\end{equation}
where \(J_1=1\). At \(J_2=0\) the model reduces to the integrable Heisenberg chain, while the nonzero \(J_2\) values considered below lie in the thermalizing regime. All data shown below are obtained by exact diagonalization (ED) in fixed global symmetry sectors for system sizes \(N=12,14,16,\) and \(18\). Within each such sector, eigenstates are ordered by energy, and the microcanonical window is chosen as the \(40\%\) to \(65\%\) portion of that ordered list to avoid edge effects~\cite{Beugeling2014}. Fig.~\ref{fig1} summarizes the finite-size scaling of the fluctuations of the spin-sector probabilities within the microcanonical energy window and of \(\langle D^{A}_{\alpha,\mathrm{prob}}(x=1/2)\rangle_{\mathcal{W}}\).

We first test Proposition 2, which bounds the microcanonical average of the probability trace distance in terms of the fluctuations of the spin-sector probabilities within the microcanonical energy window. Fig.~\ref{fig1}(a) shows the quantity
$
\sum_{S_A}  \mathrm{Var}_{\mathcal{W}}\big(P_{S_A}^{(\bm{\alpha})}\big),
$
which captures the scaling term entering the analytical bound. It decays exponentially with system size, in line with the expectation from the non-Abelian ETH. Fig.~\ref{fig1}(b) shows the corresponding window average \(\langle D^{A}_{\alpha,\mathrm{prob}}(x=1/2)\rangle_{\mathcal{W}}\), namely the quantity controlled by Proposition 2. Over the system sizes accessible numerically, this quantity is better described by a power law than by a clean exponential. We view this as a finite-size effect rather than a breakdown of Eq.~\eqref{D_prob}: the inequality in Eq.~\eqref{eq:Dprob_pair_bound} used to derive Eq.~\eqref{D_prob} is still far from saturated at these sizes. The observed power-law behavior should therefore be understood as a preasymptotic trend rather than a genuine contradiction with the thermalization prediction.

Turning back to Proposition 1, Fig.~\ref{fig2} tests whether the upper bound in Eq.~\eqref{SRTD} becomes asymptotically tight in the state-averaged sense. To this end, it shows the finite-size scaling of \(\langle D^{A}_{\alpha}(x)-D^{A}_{\alpha,\mathrm{conf}}(x)\rangle_{\mathcal{W}}\). Numerically, this difference decays exponentially with system size and appears to approach zero in the thermodynamic limit. This behavior suggests that, once the microcanonical average of the probability trace distance is suppressed, the upper bound in Proposition 1 effectively approaches an equality after state averaging. Under this interpretation, \(\langle D^{A}_{\alpha,\mathrm{prob}}\rangle_{\mathcal{W}}\) decays exponentially with system size, and the state-averaged trace distance is ultimately dominated by \(\langle D^{A}_{\alpha,\mathrm{conf}}\rangle_{\mathcal{W}}\).

\begin{figure}[!t]
    \centering
    \includegraphics[width=0.74\linewidth]{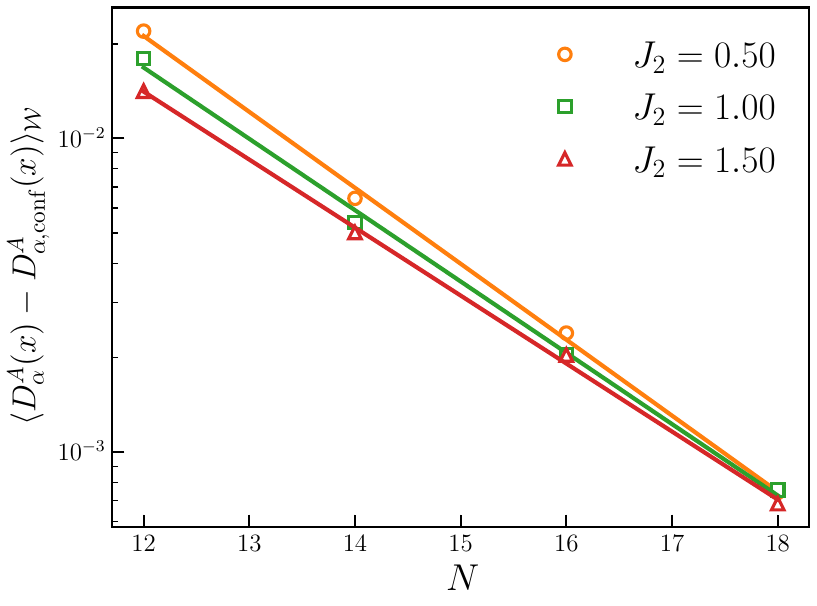}
    \caption{Finite-size scaling of \(\langle D^{A}_{\alpha}(x)-D^{A}_{\alpha,\mathrm{conf}}(x)\rangle_{\mathcal{W}}\). Different colors correspond to different \(J_2\) values.}
    \label{fig2}
\end{figure}

\textit{Conclusion.}---
We introduced a symmetry-resolved trace distance for SU(2)-symmetric many-body systems and showed that it separates into a probability trace distance and a configurational trace distance. The microcanonical average of the probability trace distance is bounded by fluctuations of the spin-sector probabilities within a microcanonical energy window, while in non-Abelian thermalizing systems the non-Abelian ETH describes the scaling of these fluctuations and hence the resulting exponential suppression with system size. Exact diagonalization of the \(J_1\)--\(J_2\) Heisenberg chain supports this picture: the fluctuations of the spin-sector probabilities decrease rapidly with system size, while the remaining finite-size behavior suggests that the total trace distance becomes asymptotically dominated by the configurational trace distance. More broadly, the block structure of the reduced density matrix separates the ETH-controlled fluctuations of the spin-sector probabilities from finer configurational fluctuations within each sector. This points to a useful strategy for diagnosing thermalization in systems with non-Abelian symmetries: rather than averaging over the symmetry structure, one should resolve it and ask which pieces are controlled by the non-Abelian ETH and which retain genuinely many-body information.

\begin{acknowledgments}
We are grateful to Lihui Pan, Hang Su and Wei Su for fruitful discussions. H.~S. and X.~W. were supported by MOST2022YFA1402701 and the NSFC Grant No.~11974244. J.~C. acknowledges the support of the Start-up Fund from Xihua University.
\end{acknowledgments}

\end{document}